\documentclass{IEEEtran}

\usepackage{cite}

\usepackage{amsmath,amssymb,amsfonts,amsthm,mathtools}
\usepackage{booktabs}
\usepackage{graphicx}
\usepackage{url}

\usepackage[hidelinks]{hyperref}
\hypersetup{
  pdftitle={Universal Approximation of Maximal Lyapunov Functions with Anchored Neural Networks},
  pdfauthor={Jun Liu}
}

\newtheorem{theorem}{Theorem}
\newtheorem{lemma}[theorem]{Lemma}
\newtheorem{proposition}[theorem]{Proposition}
\newtheorem{corollary}[theorem]{Corollary}
\theoremstyle{definition}
\newtheorem{assumption}[theorem]{Assumption}
\theoremstyle{remark}
\newtheorem{remark}[theorem]{Remark}
\theoremstyle{definition}
\newtheorem{example}{Example}[section]

\newcommand{\R}{\mathbb{R}}
\newcommand{\A}{\mathcal{A}}
\newcommand{\Lie}{\mathcal{L}}
\newcommand{\norm}[1]{\left\lVert#1\right\rVert}
\newcommand{\set}[1]{\left\{#1\right\}}

\title{Universal Approximation of Maximal Lyapunov Functions with Anchored Neural Networks}

\author{Jun~Liu%
\thanks{This work was supported in part by the Natural Sciences and Engineering
Research Council of Canada and the Canada Research Chairs Program.}
\thanks{J. Liu is with the Department of Applied Mathematics, University of
Waterloo, Waterloo, ON N2L 3G1, Canada (e-mail: j.liu@uwaterloo.ca).}}

\begin{document}
\maketitle

\begin{abstract}
Maximal Lyapunov functions encode the entire domain of attraction of an
asymptotically stable equilibrium, but preserving strict decrease under
neural approximation is difficult because its margin vanishes at the
equilibrium. For systems locally dominated by an
asymptotically stable homogeneous vector
field, we construct a continuously differentiable maximal target and an
anchored, positivity-preserving neural family. We prove semiglobal
universal approximation: strict neural Lyapunov functions and their first
derivatives can approximate the target on nested invariant sublevel sets that
exhaust the domain of attraction. We also provide directly verifiable conditions under
which a candidate neural Lyapunov function can be formally certified, and
illustrate the effectiveness of the proposed neural architecture through
numerical examples.
\end{abstract}

\begin{IEEEkeywords}
Lyapunov functions, neural networks, nonlinear systems, region of attraction,
universal approximation, Zubov equation.
\end{IEEEkeywords}

\section{Introduction}

Lyapunov functions are a cornerstone of nonlinear stability analysis and
controller design.  A strict Lyapunov function certifies local asymptotic
stability, while a \emph{maximal} one is proper on the domain of attraction
(DOA) and has sublevel sets that exhaust it
\cite{Zubov1964,VannelliVidyasagar1985}.  Computational methods have long been
studied \cite{giesl2015review}.  Recent neural approaches directly approximate
Lyapunov functions \cite{Gruene2021}, compute Zubov solutions from data
\cite{kang2023data}, or use physics-informed neural networks (PINNs)
\cite{RaissiEtAl2019} to learn and verify maximal Lyapunov functions through
Zubov's equation \cite{liu2023towards,LiuEtAl2025}.  Despite this flexibility,
classical approximation theory alone does not yield neural certificates whose
sublevels exhaust the DOA.  Classical universal
approximation results for feedforward neural networks can control both
function values and first derivatives \cite{HornikEtAl1990}, yet two
obstacles prevent a direct application.  First, one must construct a
continuously differentiable maximal target compatible with a
positivity-preserving neural architecture.  Second, uniform approximation of
values and first derivatives does not preserve strict decrease near the
equilibrium because the decrease margin vanishes there.  We address these
issues directly.

Neural Lyapunov functions have been trained to enlarge data-informed safe
regions \cite{RichardsEtAl2018}.  For controlled systems, Zubov-inspired
sampling and iterative domain expansion have been combined with joint neural
controller--Lyapunov synthesis and GPU verification to enlarge certified
stability regions \cite{LiEtAl2025}.  Other methods exploit compositional
structure \cite{Gruene2021,GrueneSperl2023,sperl2023approximation} or enforce
positivity by construction
\cite{GabyEtAl2022}.  Counterexample-guided methods have combined neural
synthesis with falsification or verification based on satisfiability modulo
theories \cite{ChangEtAl2019,AbateEtAl2021,ZhouEtAl2022} and with
mixed-integer verification \cite{DaiEtAl2021}.  The Zubov--PINN framework
developed in \cite{LiuEtAl2025} learned and verified near-maximal DOA
estimates, but its
local verification step assumes exponential stability and uses a quadratic
Lyapunov function obtained from the linearization.  A related Taylor-neural
architecture combines a quadratic local approximation with neural
higher-order terms under the assumption of a Hurwitz linearization
\cite{BarreauBastianello2024}.  Subsequent work shows how to make a standard
feedforward network strictly verifiable without requiring a separate local
quadratic Lyapunov function, but it still assumes exponential stability
\cite{LiuFitzsimmons2025}.  The present work overcomes this limitation with a
novel neural network architecture for systems with an asymptotically stable
homogeneous leading part.

Our construction uses a single certified homogeneous Lyapunov function to
prescribe the local behavior of a continuously differentiable maximal target
and to anchor the neural architecture. We first establish universality of a
positivity-preserving neural parameterization for homogeneous Lyapunov
functions. Exact local matching then supplies the regularity needed to invoke
classical approximation results for functions and their first derivatives.
The anchored form enforces positivity, while homogeneous dominance guarantees
strict decrease near the origin for every finite network. On each prescribed
compact target sublevel, sufficiently accurate approximation of values and
first derivatives preserves strict decrease away from the origin and produces 
an invariant neural sublevel set. We show that a nested sequence of these
sublevel sets exhausts the DOA, thereby establishing semiglobal universality for
both function values and first derivatives. This is the main theoretical
contribution of the paper. We further derive directly verifiable conditions
under which a given neural sublevel is a formal Lyapunov certificate and
demonstrate them with numerical examples.

\section{Problem Formulation and Preliminaries}
\label{sec:setting}

Let $f\in C^1(\R^n;\R^n)$ satisfy $f(0)=0$, and let $\phi(t,x)$ denote its
maximal flow.  The DOA of the origin is
\[
 \begin{gathered}
 \A:=\set{x\in\R^n:\,\phi(t,x)\to0\text{ as }t\to\infty}.
 \end{gathered}
\]
For a differentiable function $V:\R^n\to\R$, we write
$\Lie_fV(x):=DV(x)f(x)$. Throughout, $\norm{\cdot}$ denotes the Euclidean
norm for vectors and the induced operator norm for linear maps, and
$\overline B_\rho:=\{x\in\R^n:\norm{x}\le\rho\}$. For a scalar function $h$
that is $C^1$ near a compact set $K$, write
\[
 \norm{h}_{C^1(K)}:=\sup_{x\in K}\bigl(|h(x)|+\norm{Dh(x)}\bigr).
\]
We call
$U\in C^1(\A;\R_{\ge0})$ a maximal Lyapunov function relative to $\A$ if it
is positive definite, satisfies $\Lie_fU<0$ on $\A\setminus\{0\}$, and is
proper on $\A$, meaning that $\{x\in\A:U(x)\le c\}$ is a compact subset of
$\A$ for every $c\ge0$.  These sublevels compactly exhaust $\A$ in the
following sense: they are nested compact subsets of $\A$, and their union is
$\A$. Our goal is to approximate such a function by a sequence of neural
Lyapunov functions whose invariant sublevel sets 
compactly exhaust $\A$.

A map $g:\R^n\to\R^m$ is homogeneous of degree $q\in \R$ if
$g(rx)=r^qg(x)$ for every $x\in\R^n$ and $r>0$.  

\begin{assumption}[Homogeneous dominance]
\label{ass:homogeneous}
There exist a real degree $d\ge1$, constants $\delta,L>0$, and a continuous
$d$-homogeneous vector field $f_d$ such that the origin of
$\dot x=f_d(x)$ is locally asymptotically stable and, in a neighborhood of
the origin,
\begin{align}
 f(x)&=f_d(x)+r_f(x),                                  \label{eq:expansion}\\
 \norm{r_f(x)}&\le L\norm{x}^{d+\delta}.              \nonumber
\end{align}
\end{assumption}

The following homogeneous converse Lyapunov theorem supplies the local Lyapunov
function used throughout the paper.

\begin{lemma}[Rosier's homogeneous converse Lyapunov theorem]
\label{lem:homogeneous-converse}
Under Assumption~\ref{ass:homogeneous}, for every $p>1$ there exist $\mu>0$
and a positive definite, proper, $p$-homogeneous function
\[
 V_p\in C^1(\R^n)\cap C^\infty(\R^n\setminus\{0\})
\]
such that
\begin{equation}
 DV_p(x)f_d(x)\le-\mu\norm{x}^{p+d-1},
 \qquad x\ne0.                                        \label{eq:hom-margin}
\end{equation}
\end{lemma}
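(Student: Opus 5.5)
This is Rosier's theorem, so the plan is to reproduce its standard proof under the stated hypotheses. Only local asymptotic stability of $\dot x=f_d(x)$ is used, together with continuity and $d$-homogeneity of $f_d$; the remainder $r_f$ plays no role.

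\emph{Step 1: a $C^\infty$ Lyapunov function away from the origin.} Homogeneity gives $\phi_d(t,rx)=r\,\phi_d(r^{d-1}t,x)$ whenever solutions are defined, which upgrades local asymptotic stability to global asymptotic stability. Since $f_d$ is only continuous, solutions may fail to be unique. I would therefore invoke the Kurzweil/Massera converse theorem for continuous vector fields whose origin is (robustly) globally asymptotically stable. This yields a proper, positive definite $W\in C^\infty(\R^n\setminus\{0\})\cap C(\R^n)$ with $DW(x)f_d(x)<0$ for $x\ne0$. This step is the \textbf{main obstacle}. Without uniqueness one must appeal to Kurzweil's theorem for differential inclusions, or equivalently to the smooth converse theorems of Clarke--Ledyaev--Stern. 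These require that asymptotic stability hold uniformly over all solutions, and that uniformity follows from compactness of the unit sphere together with the scaling identity.

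\emph{Step 2: homogenization.} Choose a smooth $a:\R\to[0,1]$ with $a=0$ on $(-\infty,1]$, $a=1$ on $[2,\infty)$ and $a'>0$ on $(1,2)$. Define
\[
 V_p(x):=\int_0^\infty \frac{1}{s^{p+1}}\,a\bigl(W(s x)\bigr)\,ds .
\]
The level set $\{W\in[1,2]\}$ is compact and avoids $0$, and $W$ is proper. Hence for $x\neq0$ the integrand vanishes for small $s$, and it equals $s^{-p-1}$ for large $s$. The integral therefore converges, and differentiation under the integral shows $V_p\in C^\infty(\R^n\setminus\{0\})$. The substitution $s\mapsto s/r$ gives $V_p(rx)=r^pV_p(x)$, and $V_p(x)>0$ for $x\ne0$. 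Properness follows from positivity on the unit sphere together with homogeneity.

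\emph{Step 3: decrease and regularity.} Using $f_d(sx)=s^d f_d(x)$, compute
\[
 DV_p(x)f_d(x)=\int_0^\infty s^{-p-1}\,a'\bigl(W(sx)\bigr)\,DW(sx)f_d(sx)\,s^{1-d}\,ds .
\]
This is negative for $x\ne0$, because $a'>0$ on a set of $s$ of positive measure. The map $x\mapsto DV_p(x)f_d(x)$ is continuous and homogeneous of degree $p+d-1$. Its maximum on the unit sphere is therefore some $-\mu<0$, and scaling gives \eqref{eq:hom-margin}. Finally, $DV_p$ is $(p-1)$-homogeneous and $p>1$, so $\norm{DV_p(x)}\le C\norm{x}^{p-1}\to0$ as $x\to0$. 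Setting $DV_p(0)=0$ then makes $V_p$ of class $C^1$ on all of $\R^n$.
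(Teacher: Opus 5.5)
Your proposal is correct and matches the paper's argument. The paper invokes Rosier's Theorem~2 as a black box to obtain a positive definite, proper, $p$-homogeneous $V_p\in C^1(\R^n)\cap C^\infty(\R^n\setminus\{0\})$ with $DV_pf_d<0$ off the origin, then takes the maximum of the $(p+d-1)$-homogeneous function $DV_pf_d$ on the unit sphere exactly as in your Step~3. Your Steps~1--2 reconstruct Rosier's own proof (Kurzweil's smooth converse theorem for continuous fields, then the homogenizing integral $\int_0^\infty s^{-p-1}a(W(sx))\,ds$), which the paper delegates to the citation.
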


\begin{proof}
Rosier's homogeneous converse theorem \cite[Theorem~2]{Rosier1992}, applied
with differentiability order one and homogeneity degree $p>1$, gives a
positive definite, proper $V_p\in C^1(\R^n)\cap
C^\infty(\R^n\setminus\{0\})$ satisfying $DV_p f_d<0$ away from the origin.
The function $DV_pf_d$ is homogeneous of degree $p+d-1$.  Its maximum on the
unit sphere is therefore negative.  Define
\[
 \mu:=-\max_{\norm{u}=1}DV_p(u)f_d(u)>0.
\]
Homogeneity then gives \eqref{eq:hom-margin}.
\end{proof}

Fix $p>1$ and one function $V_p$ furnished by
Lemma~\ref{lem:homogeneous-converse}.  Because $DV_p$ is homogeneous of
degree $p-1$, there is a constant $C_V>0$ such that
$\norm{DV_p(x)}\le C_V\norm{x}^{p-1}$.  Hence, near the origin,
\[
 \Lie_fV_p(x)
 \le -\mu\norm{x}^{p+d-1}
      +C_VL\norm{x}^{p+d-1+\delta}.
\]
Thus $V_p$ is a strict local Lyapunov function for the full vector field $f$.

The following classical universal approximation result controls both function
values and first derivatives \cite{HornikEtAl1990}.

\begin{lemma}[$C^1$ universal approximation]
\label{lem:derivative-uat}
Let $K\subset\R^n$ be compact, let $h\in C^1(\R^n)$, and let
$\varepsilon>0$.  There exist an integer $m\ge1$ and parameters
$A\in\R^{m\times n}$ and $b,w\in\R^m$ such that the network
\[
 N(x)=w^\top\tanh(Ax+b),
\]
where tanh acts componentwise, satisfies
\[
 \sup_{x\in K}\left(|N(x)-h(x)|+\norm{DN(x)-Dh(x)}\right)<\varepsilon.
\]
\end{lemma}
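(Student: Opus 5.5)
The plan is to deduce Lemma~\ref{lem:derivative-uat} from the classical result of Hornik, Stinchcombe and White \cite{HornikEtAl1990}. That result says single-hidden-layer networks with a smooth, nonconstant, bounded activation whose derivative is integrable (or, more generally, an activation that is $\ell$-finite) are dense in $C^m(\R^n)$ with respect to the topology of uniform convergence of derivatives up to order $m$ on compacta.

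The first step is to check the hypotheses for $\tanh$. It is $C^\infty$, bounded and nonconstant, and its derivative $1-\tanh^2$ is positive and integrable on $\R$. In particular,
\[
 0<\int_\R |\tanh'(s)|\,ds=2<\infty,
\]
so $\tanh$ is $1$-finite in the sense of \cite{HornikEtAl1990}. Hornik et al.\ include an additive output bias. Their theorem then gives density of $\mathrm{span}\{\tanh(a^\top x+b)\}$ plus constants in $C^1(K)$, measured in the norm $\norm{\cdot}_{C^1(K)}$.

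The second step removes the constant term, since the network $N$ in the statement has no output bias. A constant $c$ can be absorbed into the hidden layer. Take one extra neuron with $a=0$ and $b=1$, and set its output weight to $c/\tanh(1)$. This neuron is exactly constant, so it contributes $c$ to $N$ and nothing to $DN$.

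Alternatively, one can give a self-contained argument. First approximate $h$ in $C^1(K)$ by a polynomial; this is possible for $C^1$ functions via convolution with a mollifier followed by Weierstrass approximation of the derivatives, or via Bernstein/Nachbin-type density. Then express each ridge monomial $(a^\top x)^k$ as a limit of finite differences in $s$ of $\tanh(s\,a^\top x+b_0)$ at a point $b_0$ with $\tanh^{(k)}(b_0)\neq0$. Such a point exists because $\tanh$ is not a polynomial. The difference quotients converge in $C^1$ on compacta because $\tanh$ is smooth, so Taylor remainders are uniformly controlled together with one derivative. Finally, monomials are spanned by ridge powers $(a^\top x)^k$ via polarization.

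The main obstacle in this route is the step that carries convergence over to derivatives. For that I would use the integral form of the Taylor remainder for $\tanh^{(k)}$ and its $(k+1)$st derivative, on the compact set of arguments $\{s\,a^\top x+b_0:x\in K,\ |s|\le1\}$.

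Either route gives $m$, $A$, $b$ and $w$ with $\norm{N-h}_{C^1(K)}<\varepsilon$. Since $\norm{DN-Dh}$ is the operator norm, which equals the Euclidean norm of the gradient, this is exactly the stated estimate.
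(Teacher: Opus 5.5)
Your proposal is correct and follows the same route as the paper, which proves this lemma simply by invoking the derivative-approximation theorem of \cite{HornikEtAl1990}; that theorem applies because $\tanh$ is smooth, bounded, and $1$-finite. The remaining bookkeeping is either as you describe or routine. Your constant hidden unit $\tanh(0^\top x+1)$ absorbs any output bias, whether or not the cited network class contains one. The componentwise $C^1$ error from that theorem converts to the stated sum of value error and Euclidean gradient error once $\varepsilon$ is shrunk by the factor $1+\sqrt{n}$.
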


We next show that the positivity-preserving homogeneous architecture used in
the computations is universal.  For a width $m\ge1$, parameters
$A\in\R^{m\times n}$ and $b,w\in\R^m$, and $x=ru\ne0$ with
$r=\norm{x}$ and $\norm{u}=1$, define
\begin{equation}
 \widehat V(0)=0,\qquad
 \widehat V(ru)=r^p\exp\!\left(w^\top\tanh(Au+b)\right).
                                                               \label{eq:local-net}
\end{equation}

\begin{proposition}[Universality of the positive homogeneous architecture]
\label{prop:homogeneous-universality}
Let $V\in C^1(\R^n)$ be positive definite, proper, and $p$-homogeneous.  For
every $\varepsilon>0$, there exist an integer $m\ge1$ and parameters
$(A,b,w)$ such that the function \eqref{eq:local-net} satisfies
\begin{equation}
 \sup_{\norm{u}=1}
 \left(|\widehat V(u)-V(u)|+
       \norm{D\widehat V(u)-DV(u)}\right)<\varepsilon. \label{eq:homogeneous-C1}
\end{equation}
Every function of the form \eqref{eq:local-net} is positive definite, proper,
and $p$-homogeneous.  It belongs to
$C^1(\R^n)\cap C^\infty(\R^n\setminus\{0\})$ and has the origin as its unique
zero.  Moreover, if
\[
 DV(x)f_d(x)\le-\mu\norm{x}^{p+d-1},\qquad x\ne0,
\]
then the parameters can be chosen so that
\[
 D\widehat V(x)f_d(x)
 \le-\frac{\mu}{2}\norm{x}^{p+d-1},\qquad x\ne0.
\]
\end{proposition}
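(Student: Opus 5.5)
The plan is to work with the logarithm of $V$ on the unit sphere and transfer approximation back through the exponential and the radial factor. Since $V$ is positive definite, continuous and $p$-homogeneous, $V>0$ on the sphere $S:=\{\norm{u}=1\}$. Define $g(x):=\log V(x)$ on a neighborhood of $S$ (say the annulus $\frac12\le\norm{x}\le2$, where $V>0$ and $C^1$). Extend $g$, via a cutoff, to a function $h\in C^1(\R^n)$ agreeing with $g$ on the compact annulus $K:=\{\frac12\le\norm{x}\le 2\}$. Then apply Lemma~\ref{lem:derivative-uat} on $K$ with tolerance $\eta$ to get $N(x)=w^\top\tanh(Ax+b)$ with $\norm{N-g}_{C^1(K)}<\eta$.

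Next I would compare $\widehat V$ and $V$. On $\R^n\setminus\{0\}$ write $\widehat V(x)=\norm{x}^p\exp(N(x/\norm{x}))$ and $V(x)=\norm{x}^p\exp(g(x/\norm{x}))$. The map $\pi(x)=x/\norm{x}$ is smooth with bounded derivative on $K$. Hence $\widehat V-V=\norm{x}^p\bigl(e^{N\circ\pi}-e^{g\circ\pi}\bigr)$, and its $C^1$ norm on $S$ is bounded by a constant times $\eta$. The constant depends only on $p$, on $\sup_S|g|$, on $\sup_S\norm{Dg}$, and on bounds for $D\pi$; the estimate uses the mean value theorem for $\exp$ on the bounded range $[\min g-1,\max g+1]$. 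Choosing $\eta$ small yields \eqref{eq:homogeneous-C1}. By $p$-homogeneity of both functions, the derivative difference is $(p-1)$-homogeneous, so we also get $\norm{D\widehat V(x)-DV(x)}\le C\eta\norm{x}^{p-1}$ for all $x\ne0$.

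For the structural claims: $\widehat V(ru)>0$ for $r>0$ and $\widehat V(0)=0$, so the origin is the unique zero. $p$-homogeneity is immediate from the definition. Properness follows from $\widehat V(x)\ge \norm{x}^p e^{-\norm{w}_1}$, since $|\tanh|\le1$. Smoothness off the origin holds because $\pi$, $\tanh$ and $\exp$ are smooth there. For $C^1$ at the origin, $|\widehat V(x)|\le C\norm{x}^p$ with $p>1$ gives $D\widehat V(0)=0$. Moreover $D\widehat V$ is $(p-1)$-homogeneous and continuous on $S$, so $\norm{D\widehat V(x)}\le C\norm{x}^{p-1}\to0$, which gives continuity of the derivative at $0$.

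The decrease estimate is the step needing care, and I expect it to follow from homogeneity. The function $f_d$ is continuous and $d$-homogeneous, so $\norm{f_d(x)}\le M\norm{x}^d$ with $M=\max_S\norm{f_d}$. Then
\[
 D\widehat V(x)f_d(x)\le DV(x)f_d(x)+C\eta M\norm{x}^{p+d-1}\le(-\mu+C\eta M)\norm{x}^{p+d-1}.
\]
Taking additionally $\eta\le\mu/(2CM)$ gives the bound $-\frac{\mu}{2}\norm{x}^{p+d-1}$. The main obstacle is making the constant $C$ in the derivative comparison uniform and explicit. This needs the computation $D(e^{N\circ\pi})=e^{N\circ\pi}DN(\pi)D\pi$, together with a bound on $|e^{N}-e^{g}|$ on $S$. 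Both are routine once $\eta\le1$, but they have to be carried out consistently on the annulus, not just on $S$, because $DN$ is evaluated at points of $S$ in directions tangent to $S$.
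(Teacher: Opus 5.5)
Your proposal is correct and follows essentially the same route as the paper: take $\log V$ near the sphere, extend it to a $C^1$ function, apply Lemma~\ref{lem:derivative-uat}, transfer the error back through $\exp$ and the factor $\norm{x}^p$ using the chain rule for $x/\norm{x}$, and use homogeneity for the structural properties and for the $\mu/2$ margin. One small correction: your closing remark has it backwards. Approximating on the whole annulus is harmless but unnecessary, because $D(N\circ\pi)(u)=DN(u)(I-uu^\top)$ involves $DN$ only at points $u$ of the sphere, so approximation on $S$ alone (as in the paper) already suffices.
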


\begin{proof}
Set $h(u)=\log V(u)$ on the unit sphere.  A continuously differentiable
extension of $h$ to $\R^n$ can be obtained with a smooth radial cutoff on an
annulus containing the sphere; denote this extension again by $h$.
Lemma~\ref{lem:derivative-uat}, applied on the unit sphere, gives finite
parameters for which $N(u)=w^\top\tanh(Au+b)$ approximates $h$ and $Dh$
uniformly.  Writing $x=ru$, where $r=\norm{x}$, gives
$D\norm{x}|_{x=u}=u^\top$ and
$D(x/\norm{x})|_{x=u}=I-uu^\top$ for $\norm{u}=1$.  The chain rule therefore
gives
\begin{align*}
 D\log\widehat V(u)
 &=pu^\top+DN(u)(I-uu^\top),\\
 D\log V(u)
 &=pu^\top+Dh(u)(I-uu^\top).
\end{align*}
It follows that $\widehat V$ and $D\widehat V$ approximate $V$ and $DV$
uniformly on the sphere.  This proves \eqref{eq:homogeneous-C1}.

The exponential is positive and bounded above and below on the sphere.
Homogeneous extension therefore gives positive definiteness and properness,
while $p>1$ gives continuous differentiability at the origin.  Finally, let
$M_f=\max_{\norm{u}=1}\norm{f_d(u)}>0$.  Choosing the derivative error smaller
than $\mu/(2M_f)$ preserves the decrease margin on the sphere.  Homogeneity
then gives the stated inequality on $\R^n\setminus\{0\}$.
\end{proof}

\begin{remark}
It is shown in
\cite[Theorem~3.1 and Corollary~3.3]{FitzsimmonsLiu2024} that homogeneous neural
Lyapunov functions exist for asymptotically stable homogeneous vector fields.
Proposition~\ref{prop:homogeneous-universality} 
similarly establishes universality of the exponential--tanh
parameterization, whose positivity, properness, and unique zero hold for every
parameter choice by construction.
\end{remark}

\begin{remark}
For computation, one can choose and freeze $(A,b)$ and fit only $w$.  On the unit
sphere, $D\log\widehat V(u)f_d(u)$ is affine in $w$, so sampled decrease
inequalities form a linear program in $w$. The sampled program is not a
certificate: the decrease inequality must still be verified over the entire
sphere.  A verified function of the form \eqref{eq:local-net} may then be used
as the fixed $V_p$ in the rest of the paper.  When $d=1$, differentiability of
$f$ and \eqref{eq:expansion} imply $f_d(x)=Df(0)x$.  The leading system is
then linear, and local asymptotic stability makes $Df(0)$ Hurwitz.  In this case,
the usual quadratic Lyapunov function provides another choice of
$V_p$ with $p=2$.
\end{remark}

\section{A Target Maximal Lyapunov Function for Homogeneously Dominated Vector Fields}
\label{sec:target}

In this section, we construct a continuously differentiable maximal Lyapunov
function tailored to the homogeneous local certificate $V_p$.  We blend the
local decrease rate $-\Lie_fV_p$ with an outer cost chosen to ensure
properness, and integrate the resulting positive cost along trajectories.
The construction yields a target $U$ that agrees exactly with $V_p$ near the
origin, so their logarithmic ratio is a continuously differentiable
correction on the entire DOA.

Choose $0<a<b$ sufficiently small that $\{V_p\le b\}\subset\A$ and
\begin{equation}
0<V_p(x)\le b \quad \Longrightarrow \quad \Lie_fV_p(x)<0.                 \label{eq:local-decrease}
\end{equation}
Let $\chi\in C^1([0,\infty);[0,1])$ equal one on $[0,a]$ and zero on
$[b,\infty)$.  Choose $q\in C^1(\R^n)$ that is positive on
$\{V_p\ge a\}$ and satisfies
$q(x)\ge c_q(1+\norm{f(x)}^2)$ on $\{V_p\ge b\}$ for some $c_q>0$.  Define
\begin{equation}
 \omega(x)=
 \chi(V_p(x))[-\Lie_fV_p(x)]
 +[1-\chi(V_p(x))]q(x).                                \label{eq:omega}
\end{equation}
The interpolation is made only inside the region where both terms are
positive.  Thus $\omega$ is continuous and positive definite.  The simple
choice $q(x)=1+\norm{f(x)}^2$ always suffices.  The numerical examples use a
different admissible choice obtained by adding the dominant-field cost
$-DV_pf_d$ to a positive multiple of this coercive term.
This choice improves numerical scaling while retaining the coercive bound
above.

\begin{theorem}[Locally matched maximal Lyapunov function]
\label{thm:target}
For $x\in\A$, define
\begin{equation}
 U(x)=\int_0^\infty\omega(\phi(t,x))\,dt.               \label{eq:U}
\end{equation}
Then $U$ is finite, $C^1$, positive definite, and proper relative to $\A$.
Moreover,
\begin{equation}
 U=V_p\quad\text{on }\{V_p\le a\},\qquad
 \Lie_fU=-\omega\quad\text{on }\A.                    \label{eq:target-id}
\end{equation}
Consequently, the function $W:\R^n\to[0,1]$ defined by
\[
 W(x)=
 \begin{cases}
  \tanh U(x),&x\in\A,\\
  1,&x\in\R^n\setminus\A,
 \end{cases}
\]
is continuous on $\R^n$ and $C^1$ on $\A$.
\end{theorem}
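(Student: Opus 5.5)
I will split the proof into a local part, a finiteness/regularity part, a properness part, and the statement about $W$.

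\emph{Local identity.} The first step is to prove $U=V_p$ on $\{V_p\le a\}$. By \eqref{eq:local-decrease}, the set $\{V_p\le a\}$ is forward invariant, and $V_p(\phi(t,x))\to0$ since it lies in $\A$. On this set $\chi(V_p)=1$, so $\omega=-\Lie_fV_p$. Then $\int_0^T\omega(\phi(t,x))\,dt=V_p(x)-V_p(\phi(T,x))$, and letting $T\to\infty$ gives $U=V_p$ there.

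\emph{Finiteness, the derivative identity, and regularity.} For general $x\in\A$, the trajectory enters $\{V_p<a\}$ at some finite time $T_x$. The hitting time is locally bounded on $\A$, because $\{V_p<a\}$ is an open neighborhood of $0$ and the flow is continuous. Hence
\[
U(x)=\int_0^{T}\omega(\phi(t,x))\,dt+V_p(\phi(T,x))
\]
for any fixed $T$ such that $\phi(T,y)\in\{V_p\le a\}$ for all $y$ in a neighborhood of $x$. This representation gives finiteness directly. For $C^1$, note that $\omega$ is $C^1$ away from the origin: $\chi$, $V_p$, $\Lie_fV_p$ and $q$ are $C^1$ off $0$, where $V_p$ is smooth and $f$ is $C^1$. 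Along the trajectory on $[0,T]$ we avoid the origin, and near the origin the integrand agrees with the locally smooth $-\Lie_f V_p$ anyway. So I will differentiate under the integral using $C^1$ dependence of $\phi$ on initial data. The resulting expression is $C^1$ in $x$ near any point, and near $0$ we simply have $U=V_p$, which is $C^1$. The identity $U(\phi(s,x))=U(x)-\int_0^s\omega(\phi(t,x))\,dt$ then gives $\Lie_fU=-\omega$.

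\emph{Properness.} This is the main obstacle. Positive definiteness is immediate from $\omega>0$ off $0$. Fix $c$ and consider $\{x\in\A:U\le c\}$.

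For boundedness, I first try to show that a trajectory starting at large $\|x\|$ must spend time in $\{V_p\ge b\}$ where $q\ge c_q(1+\|f\|^2)$. To travel from $x$ back to the compact set $\{V_p\le b\}$ takes arc length $\ell\ge \|x\|-R$. On the portion in $\{V_p\ge b\}$,
\[
\int \omega\,dt\ge c_q\int(1+\|f\|^2)\,dt\ge 2c_q\int\|f\|\,dt\ge 2c_q\ell
\]
by AM--GM. So $U(x)\ge 2c_q(\|x\|-R)$, which is unbounded.

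For closedness in $\A$, I take $x_k\to x^*$ with $U(x_k)\le c$ and must show $x^*\in\A$. Suppose not. Then for any $T$, by continuity $\phi(t,x_k)\to\phi(t,x^*)$ on $[0,T]$, and the trajectory of $x^*$ never enters $\{V_p\le a\}$ (otherwise $x^*\in\A$). The trajectory either stays in a compact region outside $\{V_p< a\}$ where $\omega\ge\omega_0>0$, or escapes and incurs cost by the arc-length bound. Either way $\int_0^T\omega(\phi(t,x_k))\,dt$ eventually exceeds $c$ for large $T$, which is a contradiction. Finite-time blow-up is handled by the same arc-length bound, since it forces $U(x_k)\to\infty$.

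Compactness in $\A$ then follows from boundedness and closedness together with continuity of $U$.

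\emph{The function $W$.} Continuity of $W$ on $\A$ and the $C^1$ property there are clear. At a boundary point $x^*\notin\A$, properness gives $U(x_k)\to\infty$ for $x_k\in\A$ with $x_k\to x^*$: otherwise some subsequence has $U(x_k)\le c$, and its limit would lie in $\A$. Hence $\tanh U(x_k)\to1$. Since $\A$ is open, $W\equiv1$ near points of the interior of the complement, so $W$ is continuous everywhere.
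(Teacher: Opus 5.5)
Your proof is correct. It follows the paper's overall plan: the exact identity on $\{V_p\le a\}$, a finite-horizon representation of $U$, properness from the coercive outer cost, and continuity of $W$ from properness. Three steps are carried out differently.

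\emph{Regularity.} The paper writes $\omega=-\Lie_fV_p+\zeta$ with $\zeta=(1-\chi(V_p))(q+\Lie_fV_p)$. This $\zeta$ is $C^1$ on all of $\R^n$ because it vanishes near the origin, so $U(x)=V_p(x)+\int_0^{T_0}\zeta(\phi(t,x))\,dt$ holds near every point of $\A$, the origin included, with no case distinction. You instead use $\omega\in C^1(\R^n\setminus\{0\})$, the fact that trajectories from $x\ne0$ avoid the origin on $[0,T]$, and $U=V_p$ near $0$. That is equally valid. Your aside that $-\Lie_fV_p$ is ``locally smooth'' at the origin is not justified, but you never use it, so simply drop it.

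\emph{Boundedness.} Your inequality $1+\norm{f}^2\ge2\norm{f}$ bounds the arc length before entry into $\{V_p\le b\}$ directly by $U(x)/(2c_q)$. This gives the explicit estimate $U(x)\ge2c_q(\norm{x}-R_b)$, where $R_b=\max_{V_p\le b}\norm{x}$. The paper instead first bounds the entry time by $M/m$ and then applies Cauchy--Schwarz, so your route is a bit more elementary and more quantitative.

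\emph{Closedness.} The paper uses that $\omega\ge m>0$ on all of $\{V_p\ge a\}$, since $q\ge c_q$ on $\{V_p\ge b\}$. Hence the trajectory of every point of $\{U\le M\}$ lies in $\{V_p\le a\}$ at the common time $T=M/m$. Together with boundedness and forward invariance of $\{U\le M\}$, continuous dependence and closedness of $\{V_p\le a\}$ then finish the argument directly. Your contradiction argument, with its bounded/unbounded/blow-up case split, is sound. This uniform lower bound on $\omega$ would let you dispense with the case split.
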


The proof is given in Appendix~\ref{app:target}.

The following corollary establishes the regularity of the logarithmic
correction used in the multiplicative neural architecture below.  It also
shows that $U$ exactly matches $V_p$ on $\{V_p\le a\}$.

\begin{corollary}[Regular logarithmic correction]
\label{cor:factor}
The function $F:\A\to\R$ defined by
\begin{equation}
 F(x)=\begin{cases}
 \log\bigl(U(x)/V_p(x)\bigr),& x\in\A\setminus\{0\},\\
 0,&x=0,
 \end{cases}                                             \label{eq:F}
\end{equation}
belongs to $C^1(\A)$ and is identically zero on $\{V_p\le a\}$.
\end{corollary}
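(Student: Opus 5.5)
The plan is to read everything off Theorem~\ref{thm:target}, combined with the regularity of $V_p$ from Lemma~\ref{lem:homogeneous-converse}. The proof splits into two regions: the local sublevel $\{V_p\le a\}$, where $F$ vanishes identically, and the punctured domain $\A\setminus\{0\}$, where $F$ is a composition of $C^1$ maps. These two regions overlap on an open set, and I will patch the two descriptions together there.

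\emph{Step 1 (vanishing near the origin).} By \eqref{eq:target-id}, $U=V_p$ on $\{V_p\le a\}$. For $x\neq0$ in this set, $V_p(x)>0$ by positive definiteness, so $F(x)=\log 1=0$. At $x=0$, $F(0)=0$ by definition. Hence $F\equiv0$ on $\{V_p\le a\}$. Continuity of $V_p$ shows that the set $\{V_p<a\}$ is an open neighborhood of the origin. Moreover $\{V_p\le a\}\subset\{V_p\le b\}\subset\A$, so this neighborhood lies inside $\A$. On it $F$ is identically zero and therefore smooth.

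\emph{Step 2 (regularity away from the origin).} First, $\A$ is open, since it is a domain of attraction of an asymptotically stable equilibrium and the flow depends continuously on initial data. Hence $\A\setminus\{0\}$ is open. On this set:
\begin{itemize}
\item $U$ is $C^1$ and positive, by Theorem~\ref{thm:target}, since $U$ is positive definite;
\item $V_p$ is $C^\infty$ and positive, by Lemma~\ref{lem:homogeneous-converse}.
\end{itemize}
So $U/V_p$ is a positive $C^1$ function there, and $F=\log U-\log V_p$ is $C^1$ on $\A\setminus\{0\}$.

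\emph{Step 3 (gluing).} The open sets $\{V_p<a\}$ and $\A\setminus\{0\}$ cover $\A$. On their overlap, Step 1 gives $F=0$, consistent with the formula in \eqref{eq:F}. Because differentiability is a local property, $F\in C^1(\A)$.

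The only point needing care is the origin, where the quotient $U/V_p$ is a priori of the form $0/0$. The exact local matching $U=V_p$ from Theorem~\ref{thm:target} is exactly what removes this difficulty. Once that identity is available, there is no genuine obstacle. The substantive work, namely the $C^1$ regularity of $U$ and the local identity, has already been done in Theorem~\ref{thm:target}.
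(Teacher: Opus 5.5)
Your proof is correct and follows the paper's own reasoning. The paper gives no separate proof and treats the corollary as an immediate consequence of Theorem~\ref{thm:target}: the exact identity $U=V_p$ on $\{V_p\le a\}$ removes the $0/0$ issue at the origin, and $U\in C^1(\A)$ together with $V_p\in C^\infty(\R^n\setminus\{0\})$ handles the rest, which is exactly your open-cover gluing argument.
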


The following example shows that the simpler cost $\omega_d=-DV_pf_d$
guarantees neither exact local matching nor properness in general.  Let
$f(x)=-x-x^5$, $f_d(x)=-x$, and $V_p(x)=x^2/2$.  Then
$\omega_d(x)=x^2$ is positive definite, while $\A=\R$ and
\[
 \int_0^\infty\omega_d(\phi(t,x))\,dt
 =\tfrac12\arctan(x^2)\longrightarrow\tfrac\pi4.
\]
Thus the resulting value function is bounded and not maximal.  The outer
term in \eqref{eq:omega} is one convenient general way to enforce properness;
it can be omitted when properness of the chosen cost is established
separately.

\section{Anchored Neural Architecture and Universal Approximation}
\label{sec:universal}

This section introduces a neural architecture anchored by the certified local
Lyapunov function $V_p$.  We first prove that, for every parameter choice,
$H_\theta$ and $W_\theta$ are strict Lyapunov functions near the origin.  We
then show that the proposed architecture can approximate both the maximal
target $U$ and its bounded transform $W$ on arbitrarily large compact
sublevels while producing invariant sublevels that exhaust $\A$. 

Let $N_\theta:\R^n\to\R$ be a fully connected feedforward network with tanh activations, and define
\begin{equation}
 \begin{aligned}
 G_\theta(x)&=N_\theta(x)-N_\theta(0),\\
 H_\theta(x)&=V_p(x)e^{G_\theta(x)},\qquad
 W_\theta(x)=\tanh H_\theta(x).
 \end{aligned}                                         \label{eq:architecture}
\end{equation}
We refer to $H_\theta$ and $W_\theta$ as \emph{anchored} neural networks.  The fixed
factor $V_p$ serves as their anchor, while subtracting $N_\theta(0)$ centers
the neural correction at the origin.  Thus $G_\theta(0)=0$ and
$H_\theta(x)/V_p(x)\to1$ as $x\to0$.  Because $N_\theta$ is bounded,
there exist constants $0<m_\theta\le M_\theta<\infty$ such that
\[
 m_\theta V_p(x)\le H_\theta(x)\le M_\theta V_p(x).
\]
Thus $H_\theta$ is positive definite, with the origin as its unique zero.
Moreover, each sublevel $\{H_\theta\le c\}$ is a closed subset of the compact
set $\{V_p\le c/m_\theta\}$, so $H_\theta$ is proper.  The transformation
$W_\theta=\tanh H_\theta$ provides a bounded Zubov-type representation.
Since $H_\theta$ is finite and nonnegative, $0\le W_\theta(x)<1$ for every
$x$.  The strict monotonicity of tanh preserves sublevel sets up to a change
of level, while
\[
 \Lie_fW_\theta=(1-W_\theta^2)\Lie_fH_\theta
\]
preserves strict decrease.  Also,
$W_\theta(x)\to1$ as $\norm{x}\to\infty$.

The first result shows that both representations satisfy the local Lyapunov
conditions for every parameter vector.

\begin{proposition}[Automatic satisfaction of local Lyapunov conditions]
\label{prop:local}
Under Assumption~\ref{ass:homogeneous}, for every parameter vector $\theta$,
the corresponding functions $H_\theta$ and $W_\theta$ are strict Lyapunov
functions on some neighborhood of the origin.
\end{proposition}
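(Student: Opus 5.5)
We compute the Lie derivative of $H_\theta$ directly and show that the neural correction contributes only a higher-order perturbation to the homogeneous margin \eqref{eq:hom-margin}. The product rule gives
\[
 \Lie_fH_\theta(x)=e^{G_\theta(x)}\bigl(\Lie_fV_p(x)+V_p(x)\,DG_\theta(x)f(x)\bigr).
\]
The factor $e^{G_\theta}$ is positive, so it suffices to show that the bracket is negative on a punctured neighborhood of the origin.

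The first term is already controlled. The estimate following Lemma~\ref{lem:homogeneous-converse} gives, near the origin,
\[
 \Lie_fV_p(x)\le-\mu\norm{x}^{p+d-1}+C_VL\norm{x}^{p+d-1+\delta}.
\]

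For the second term we bound each factor separately.
\begin{itemize}
\item By $p$-homogeneity and continuity of $V_p$ on the unit sphere, $V_p(x)\le c_V\norm{x}^p$ for some constant $c_V$.
\item Since $N_\theta$ is a finite tanh network, it is smooth, so $\norm{DG_\theta(x)}=\norm{DN_\theta(x)}\le C_\theta$ on $\overline B_1$.
\item By Assumption~\ref{ass:homogeneous}, $\norm{f(x)}\le\norm{f_d(x)}+L\norm{x}^{d+\delta}\le(M_f+L)\norm{x}^d$ for small $\norm{x}$, where $M_f=\max_{\norm{u}=1}\norm{f_d(u)}$ and homogeneity is used.
\end{itemize}
Combining these,
\[
 |V_p(x)DG_\theta(x)f(x)|\le c_VC_\theta(M_f+L)\norm{x}^{p+d}.
\]

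This term is of order $\norm{x}^{p+d}=\norm{x}^{p+d-1}\cdot\norm{x}$, strictly higher than the margin $\norm{x}^{p+d-1}$. The same holds for the remainder term $C_VL\norm{x}^{p+d-1+\delta}$, since $\delta>0$. Choose $\rho_\theta>0$ small enough that both perturbations together are at most $\tfrac{\mu}{2}\norm{x}^{p+d-1}$ on $\overline B_{\rho_\theta}$. Then
\[
 \Lie_fH_\theta(x)\le-\tfrac{\mu}{2}e^{G_\theta(x)}\norm{x}^{p+d-1}<0,\qquad 0<\norm{x}\le\rho_\theta.
\]

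Positive definiteness and $C^1$ regularity of $H_\theta$ were already established before the proposition, from the sandwich $m_\theta V_p\le H_\theta\le M_\theta V_p$. The function $H_\theta$ is $C^1$ at the origin because $V_p\in C^1$ and $e^{G_\theta}$ is smooth. Hence $H_\theta$ is a strict Lyapunov function on $B_{\rho_\theta}$. For $W_\theta$, use
\[
 \Lie_fW_\theta=(1-W_\theta^2)\Lie_fH_\theta,
\]
where $1-W_\theta^2>0$ everywhere. Positive definiteness of $W_\theta$ follows because $\tanh$ is strictly increasing and vanishes only at $0$.

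The main obstacle is that the neighborhood depends on $\theta$ through $C_\theta$ and no uniformity is claimed. The key observation is that $V_p$ vanishes to order $p$ while $DG_\theta f$ vanishes to order $d$. Their product therefore beats the order $p+d-1$ margin by one full power of $\norm{x}$, without any smallness assumption on the network.
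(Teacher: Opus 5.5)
Your proof is correct and follows essentially the same route as the paper. Both arguments use the product-rule formula for $\Lie_fH_\theta$, bound the network term $V_p\,DG_\theta f$ by a constant times $\norm{x}^{p+d}$ against the $\norm{x}^{p+d-1}$ margin, and transfer the conclusion to $W_\theta$ through the positive factor $1-W_\theta^2$. The only cosmetic difference is that the paper first absorbs the $\delta$-remainder into a single margin $-\alpha\norm{x}^{p+d-1}$ and then handles the network term, whereas you absorb both perturbations into $\mu/2$ in one step.
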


\begin{proof}
There exist $\alpha,\rho_0>0$ such that
\[
 \Lie_fV_p(x)\le-\alpha\norm{x}^{p+d-1},
 \qquad 0<\norm{x}\le\rho_0.
\]
By homogeneity of $V_p$ and the local expansion of $f$, after decreasing
$\rho_0$ if necessary, there exist $M_V,M_f>0$ such that, for
$\norm{x}\le\rho_0$,
\[
 V_p(x)\le M_V\norm{x}^p,\qquad
 \norm{f(x)}\le M_f\norm{x}^d.
\]
Because $N_\theta$ is smooth, $DG_\theta$ is bounded on the closed ball
$\{x:\norm{x}\le\rho_0\}$; let $L_\theta$ be such a bound.  Differentiating
$H_\theta$ gives
\begin{equation}
 \Lie_fH_\theta=e^{G_\theta}
 \left(\Lie_fV_p+V_pDG_\theta f\right).                 \label{eq:Hdot}
\end{equation}
Consequently,
\[
 \Lie_fH_\theta(x)
 \le e^{G_\theta(x)}\norm{x}^{p+d-1}
 \left(-\alpha+M_VM_fL_\theta\norm{x}\right).
\]
The right-hand side is negative for all nonzero $x$ in a sufficiently small
ball.  Thus $H_\theta$ is a strict Lyapunov function there.  Since
$W_\theta=\tanh H_\theta$ is positive definite, $0\le W_\theta<1$, and
$\Lie_fW_\theta=(1-W_\theta^2)\Lie_fH_\theta$, the same conclusion holds for
$W_\theta$.
\end{proof}

The next result goes beyond this local guarantee: a suitable network can be
chosen so that both representations approximate their targets arbitrarily
well and are strict Lyapunov functions on any prescribed compact target
sublevel.

\begin{theorem}[Semiglobal universality of anchored neural Lyapunov functions]
\label{thm:semiglobal}
Let $U$ and $W$ be the target functions from
Theorem~\ref{thm:target}, and write
$K_\ell=\{x\in\A:U(x)\le\ell\}$.  For every
$0<\ell_-<\ell_+$ and every $\varepsilon>0$, there exist a tanh network
architecture, a parameter vector $\theta$, and a level $\gamma>0$ such that
\begin{equation}
 \begin{aligned}
  \norm{H_\theta-U}_{C^1(K_{\ell_+})}&<\varepsilon,\\
  \norm{W_\theta-W}_{C^1(K_{\ell_+})}&<\varepsilon.
 \end{aligned}                                             \label{eq:sg-approx}
\end{equation}
Moreover,
\begin{equation}
 \begin{aligned}
  \Lie_fH_\theta(x)&<0,\\
  \Lie_fW_\theta(x)&<0,
 \end{aligned}
 \qquad x\in K_{\ell_+}\setminus\{0\}.                  \label{eq:sg-decrease}
\end{equation}
Set $c_\theta=\tanh\gamma$.  The origin-containing components of
$\{H_\theta\le\gamma\}$ and $\{W_\theta\le c_\theta\}$ coincide; denote
their common value by $\Omega_{\theta,\gamma}^0$.  It satisfies
\begin{equation}
 K_{\ell_-}\subset\Omega_{\theta,\gamma}^0
 \subset\operatorname{int}K_{\ell_+}.                       \label{eq:sg-sandwich}
\end{equation}
Consequently, $\Omega_{\theta,\gamma}^0$ is compactly contained in $\A$, and
$H_\theta$ and $W_\theta$ are strict Lyapunov functions on this component.
Moreover, there exist tanh networks $N_{\theta_j}$ and levels
$\gamma_j>0$ whose corresponding components $\Omega_{\theta_j,\gamma_j}^0$
form a nested sequence with union $\A$.
\end{theorem}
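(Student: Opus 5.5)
The plan is to approximate the logarithmic correction $F$ from Corollary~\ref{cor:factor} rather than $U$ itself. Fix $0<\ell_-<\ell_+$ and choose an intermediate level $\ell_+'>\ell_+$. Since $U$ is proper and continuous, $K_{\ell_+'}$ is a compact subset of $\A$. Because $F\in C^1(\A)$, I would multiply $F$ by a smooth cutoff equal to one on a neighborhood of $K_{\ell_+'}$ and supported in $\A$. This gives an extension $\tilde F\in C^1(\R^n)$ that agrees with $F$ on $K_{\ell_+'}$.

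Lemma~\ref{lem:derivative-uat} then gives a tanh network $N_\theta$ with $\norm{N_\theta-\tilde F}_{C^1(K_{\ell_+'})}<\eta$. Since $F(0)=0$ and $DF\equiv0$ on $\{V_p\le a\}$, we get $|N_\theta(0)|<\eta$. Hence $G_\theta$ satisfies $\norm{G_\theta-F}_{C^1(K_{\ell_+'})}<2\eta$. Write $H_\theta-U=V_p(e^{G_\theta}-e^{F})$ and
\[
 DH_\theta-DU=(e^{G_\theta}-e^{F})DV_p+V_p\bigl(e^{G_\theta}DG_\theta-e^FDF\bigr).
\]
Since $V_p$, $DV_p$, $F$ and $DF$ are bounded on the compact set $K_{\ell_+'}$, both differences are $O(\eta)$ there. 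The chain rule through $\tanh$ gives the same for $W_\theta-W$. Choosing $\eta$ small therefore yields \eqref{eq:sg-approx}.

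\textbf{Main obstacle: strict decrease \eqref{eq:sg-decrease}.} The margin $\omega$ vanishes at $0$, so I split $K_{\ell_+}$ into two regions.

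\emph{Near region.} Take a small ball $\overline B_{\rho}\subset\{V_p\le a\}$. Here $U=V_p$, so the local margin $\Lie_fV_p\le-\alpha\norm{x}^{p+d-1}$ holds. In formula \eqref{eq:Hdot}, the perturbation term satisfies $|V_pDG_\theta f|\le M_VM_f\norm{x}^{p+d}\,\norm{DG_\theta}$. Two facts control it. First, $\norm{DG_\theta}=\norm{DG_\theta-DF}<2\eta$ on this ball, uniformly in the network. Second, $\norm{x}\le\rho$. So if $2\eta M_VM_f\rho<\alpha$, the bound holds with a radius $\rho$ that does not depend on $\theta$. This is the key point: unlike Proposition~\ref{prop:local}, the local radius must be uniform, and the $C^1$ closeness of $G_\theta$ to the locally zero $F$ supplies exactly that.

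\emph{Far region.} On the compact set $K_{\ell_+}\setminus B_\rho$ we have $\Lie_fU=-\omega\le-\omega_0<0$. Also $\norm{\Lie_fH_\theta-\Lie_fU}\le\norm{DH_\theta-DU}\,\max\norm{f}$, so a small $\eta$ gives $\Lie_fH_\theta<0$ there. The factor $(1-W_\theta^2)>0$ transfers the conclusion to $W_\theta$.

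\textbf{Sandwich \eqref{eq:sg-sandwich}.} Choose $\gamma=(\ell_-+\ell_+)/2$ and require $\varepsilon<(\ell_+-\ell_-)/2$. Then $H_\theta<\gamma$ on $K_{\ell_-}$. Since $K_{\ell_-}$ is connected, being a sublevel of a Lyapunov function whose points flow to $0$ within it, we get $K_{\ell_-}\subset\Omega^0_{\theta,\gamma}$.

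Conversely, $H_\theta>\gamma$ on the boundary-type set $\{U=\ell_+\}$. Consider the origin-containing component of $\{H_\theta\le\gamma\}$. It cannot cross the level set $\{U=\ell_+\}$, since that would leave $K_{\ell_+}$ through a point where $H_\theta>\gamma$. So it lies in $\{U<\ell_+\}=\operatorname{int}K_{\ell_+}$. Compactness and invariance then follow from $\Lie_fH_\theta<0$ on this component. The components for $H_\theta$ and $W_\theta$ coincide because $\tanh$ is strictly monotone, so the two sublevel sets are equal.

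\textbf{Exhaustion.} Pick levels $\ell_j\uparrow\infty$ and apply the result with $(\ell_-,\ell_+)=(\ell_{2j},\ell_{2j+1})$. This gives
\[
 \Omega_j\subset\operatorname{int}K_{\ell_{2j+1}}\subset K_{\ell_{2j+2}}\subset\Omega_{j+1},
\]
so the sets are nested. Their union contains $\bigcup_\ell K_\ell=\A$ by properness of $U$, and each $\Omega_j\subset\A$, so the union is exactly $\A$.
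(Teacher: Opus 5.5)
Your proposal is correct and follows the paper's proof essentially step for step. The shared steps are: $C^1$ approximation of the cut-off correction $\psi F$; centering via $N_\theta(0)$; a near region where $U=V_p$ and $\norm{DG_\theta}$ is uniformly small, so that \eqref{eq:Hdot} keeps the homogeneous margin; a compact annulus where $-\omega$ dominates the derivative error; a level $\gamma$ strictly between $\ell_-$ and $\ell_+$, with the connectivity argument for \eqref{eq:sg-sandwich}; and interleaved levels for exhaustion.

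The differences are only cosmetic: you use a ball $\overline B_\rho$ instead of the sublevel $P_c$, an enlarged set $K_{\ell_+'}$ that is not needed, and the midpoint $\gamma$. Your condition ``$\varepsilon<(\ell_+-\ell_-)/2$'' should be read as shrinking the approximation tolerance to $\min\{\varepsilon,(\ell_+-\ell_-)/2\}$, not as restricting the given $\varepsilon$.
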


The proof is given in Appendix~\ref{app:semiglobal}.  It uses the identity
$U=V_p$ near the origin and $C^1$ approximation on the remaining compact
annulus.

\section{Verification}
\label{sec:verification}

In this section, we discuss formal verification of a candidate neural network Lyapunov function given by the architecture (\ref{eq:architecture}). Given a certified $V_p$, the architecture (\ref{eq:architecture}) already makes $W_\theta$ positive definite with a unique zero.  It remains to prove strict 
decrease and containment of sublevel sets in a verification domain. To address the vanishing margin in Lyapunov inequalities near the origin, we separate a neighborhood of the origin from the remainder of the candidate sublevel: a homogeneous perturbation bound handles the inner neighborhood, while a nonlinear solver checks the outer shell and box boundary.

For $x=ru\ne0$, where $r=\norm{x}$ and $u=x/r$, set
$E(r,u):=r^{-d}f(ru)-f_d(u)$. Define
\begin{equation}
 \Phi_\theta(x)=r^{1-d}D\log H_\theta(x)f(x).             \label{eq:scaled-decrease}
\end{equation}
Then $f(ru)=r^d[f_d(u)+E(r,u)]$, and Assumption~\ref{ass:homogeneous}
gives $\norm{E(r,u)}\le Lr^\delta$ for sufficiently small $r$. Moreover,
\begin{align}
 \Phi_\theta(ru)
 &= [q_p(u)+rDG_\theta(ru)][f_d(u)+E(r,u)],\nonumber\\
 q_p(u)&=D\log V_p(u).                                   \label{eq:scaled-identity}
\end{align}

\begin{proposition}[Verifiable sublevel conditions]
\label{prop:verification}
Under Assumption~\ref{ass:homogeneous}, fix $\rho>0$, and let $\mathcal D\subset \R^n$ 
be compact with $\overline B_\rho\subset\operatorname{int}\mathcal D$.
Suppose that $\overline B_\rho$ lies in the neighborhood where
\eqref{eq:expansion} holds and there exist $\eta>0$ and
$M_q,M_d,L_G\ge0$ such that, for
$\norm{u}=1$ and $0<r\le\rho$,
\begin{align}
 q_p(u)f_d(u)&\le-\eta,&
 \norm{q_p(u)}&\le M_q,& \norm{f_d(u)}&\le M_d,\nonumber\\
 \norm{E(r,u)}&\le Lr^\delta,&
 \norm{DG_\theta(ru)}&\le L_G,                          \label{eq:verify-bounds}
\end{align}
and
\begin{equation}
 \beta_\rho:=M_qL\rho^\delta+
 \rho L_G(M_d+L\rho^\delta)<\eta.                       \label{eq:local-verification}
\end{equation}
If, for some $c\in(0,1)$ and $\varepsilon>0$,
\begin{align}
 x\in\mathcal D,\ \norm{x}\ge\rho,\ W_\theta(x)\le c
 &\Longrightarrow \Phi_\theta(x)\le-\varepsilon,       \label{eq:shell-verification}\\
 x\in\partial\mathcal D&\Longrightarrow W_\theta(x)>c, \label{eq:boundary-verification}
\end{align}
then $\Omega_{\theta,c}=\{x\in\mathcal D:W_\theta(x)\le c\}$ is compactly
contained in $\operatorname{int}\mathcal D$, is positively invariant, and
every trajectory starting in it converges to the origin.
\end{proposition}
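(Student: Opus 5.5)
The plan is to show that $\Phi_\theta$ is negative on the whole punctured set $\Omega_{\theta,c}\setminus\{0\}$, split into an inner ball and an outer shell. Compactness and invariance then follow from a boundary argument, and convergence from a standard Lyapunov/LaSalle argument.

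\textbf{Step 1 (inner ball).} For $0<r\le\rho$ and $\norm{u}=1$, expand \eqref{eq:scaled-identity} as
\[
 \Phi_\theta(ru)=q_p(u)f_d(u)+q_p(u)E(r,u)+rDG_\theta(ru)[f_d(u)+E(r,u)].
\]
Using \eqref{eq:verify-bounds}, the first term is at most $-\eta$. The second is at most $M_qLr^\delta$, and the third at most $rL_G(M_d+Lr^\delta)$. Both bounds are monotone in $r$, so $\Phi_\theta(ru)\le-\eta+\beta_\rho<0$ by \eqref{eq:local-verification}. Since $\Lie_fH_\theta=H_\theta\,D\log H_\theta\,f=H_\theta r^{d-1}\Phi_\theta$ and $H_\theta>0$ away from the origin, we get $\Lie_fH_\theta<0$, and hence $\Lie_fW_\theta<0$, on $\overline B_\rho\setminus\{0\}$. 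Note that $\overline B_\rho\subset\mathcal D$.

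\textbf{Step 2 (shell and compactness).} On $\{x\in\mathcal D:\norm{x}\ge\rho,\ W_\theta(x)\le c\}$, \eqref{eq:shell-verification} gives $\Phi_\theta\le-\varepsilon$, so again $\Lie_fW_\theta<0$. Thus $\Lie_fW_\theta<0$ on $\Omega_{\theta,c}\setminus\{0\}$. The set $\Omega_{\theta,c}$ is closed in the compact set $\mathcal D$, hence compact. By \eqref{eq:boundary-verification} it misses $\partial\mathcal D$, so it lies in $\operatorname{int}\mathcal D$ and is compactly contained there. By continuity of $W_\theta$ and compactness of $\partial\mathcal D$, we also have $\min_{\partial\mathcal D}W_\theta>c$.

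\textbf{Step 3 (invariance and convergence).} This is the main point requiring care, because $\Omega_{\theta,c}$ is defined relative to $\mathcal D$ rather than as a global sublevel set. Take $x_0\in\Omega_{\theta,c}$ and suppose the trajectory leaves $\Omega_{\theta,c}$. Let $t^*$ be the supremum of times $t$ such that $\phi([0,t],x_0)\subset\Omega_{\theta,c}$. While the trajectory stays in $\Omega_{\theta,c}$, $W_\theta$ is nonincreasing along it, so $W_\theta(\phi(t,x_0))\le c$. At $t^*$ the point $y=\phi(t^*,x_0)$ lies in $\Omega_{\theta,c}$ by closedness, and $y\in\operatorname{int}\mathcal D$.

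If $y=0$, the trajectory stays at the equilibrium, which contradicts leaving. Otherwise $\Lie_fW_\theta(y)<0$, so $W_\theta(\phi(t,x_0))<c$ for $t$ slightly larger than $t^*$. Since $y\in\operatorname{int}\mathcal D$, the trajectory also remains in $\mathcal D$ for such $t$. This contradicts the definition of $t^*$. Hence $\Omega_{\theta,c}$ is positively invariant, and by compactness solutions starting in it exist for all $t\ge0$.

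Finally, $W_\theta$ is $C^1$ and nonincreasing along trajectories, and $\Lie_fW_\theta<0$ off the origin in $\Omega_{\theta,c}$. LaSalle's invariance principle on the compact invariant set $\Omega_{\theta,c}$ gives convergence to the largest invariant subset of $\{\Lie_fW_\theta=0\}\cap\Omega_{\theta,c}=\{0\}$. Equivalently, one can argue that the limit of $W_\theta$ along the trajectory is attained on the $\omega$-limit set, which must be $\{0\}$ since $W_\theta$ is positive definite.
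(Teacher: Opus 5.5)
Your proposal is correct and follows essentially the same route as the paper: the inner-ball perturbation bound and the shell condition together give $\Lie_f W_\theta<0$ on $\Omega_{\theta,c}\setminus\{0\}$, and the boundary condition gives compact containment in $\operatorname{int}\mathcal D$. Your first-exit-time argument makes rigorous the paper's one-line invariance claim before the same Lyapunov/LaSalle conclusion, though in your closing alternative the $\omega$-limit set is $\{0\}$ because $W_\theta$ is constant on that invariant set (forcing $\Lie_f W_\theta=0$ there), not merely because $W_\theta$ is positive definite.
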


\begin{proof}
Set $\alpha_\rho=\eta-\beta_\rho>0$.  For $0<r\le\rho$,
\eqref{eq:scaled-identity} and \eqref{eq:verify-bounds} give
\[
 \begin{aligned}
 &|\Phi_\theta(ru)-q_p(u)f_d(u)|
 \le M_qLr^\delta+rL_G(M_d+Lr^\delta)\le\beta_\rho.
 \end{aligned}
\]
Hence $\Phi_\theta(ru)\le-\alpha_\rho<0$ on the punctured ball.
Condition~\eqref{eq:shell-verification} gives
$\Phi_\theta\le-\varepsilon$ on the remainder of
$\Omega_{\theta,c}$.  Therefore $\Phi_\theta<0$ throughout
$\Omega_{\theta,c}\setminus\{0\}$.  For $x=ru\ne0$,
\[
 \Lie_fW_\theta(x)
 =(1-W_\theta(x)^2)H_\theta(x)r^{d-1}\Phi_\theta(x)<0,
\]
because every factor preceding $\Phi_\theta(x)$ is positive.

Continuity of $W_\theta$ makes $\Omega_{\theta,c}$ closed in the compact set
$\mathcal D$.  Condition~\eqref{eq:boundary-verification} makes it disjoint
from $\partial\mathcal D$; hence
$\Omega_{\theta,c}\Subset\operatorname{int}\mathcal D$.  Along a solution
starting in $\Omega_{\theta,c}$, $W_\theta$ cannot increase through the level
$c$, and the solution cannot reach $\partial\mathcal D$, where
$W_\theta>c$.  Thus $\Omega_{\theta,c}$ is positively invariant.  Its
compactness also gives forward completeness.  Finally,
$\Lie_fW_\theta=0$ in $\Omega_{\theta,c}$ only at the origin, so standard Lyapunov analysis 
gives $\phi(t,x)\to0$ for every
$x\in\Omega_{\theta,c}$.
\end{proof}

The bounds in \eqref{eq:verify-bounds} can be obtained using exact-rational
Taylor arithmetic, outward-rounded matrix and spectral-norm bounds, and
coefficient bounds. The homogeneous decrease condition in
\eqref{eq:verify-bounds}, the outer-shell condition
\eqref{eq:shell-verification}, and the boundary condition
\eqref{eq:boundary-verification} can then be verified using
dReal~\cite{GaoEtAl2013}.  GPU-based neural-network bound propagation with
adaptive domain subdivision provides a complementary approach for verifying
such inequalities; see \cite{LiEtAl2026Tutorial} for a recent tutorial.  
In the next section, we formally verify neural Lyapunov functions for
numerical examples using both dReal and the neural-network verifier
$\alpha,\beta$-CROWN. Verification code
implementing both approaches is provided.

\section{Examples}
\label{sec:examples}

The numerical experiments were developed using an experimental version of
LyZNet~\cite{liu2024lyznet}; the trained artifacts and self-contained
verification scripts are available at 
\url{https://github.com/j49liu/anchored-lyapunov-networks}.

\begin{example}[Two-machine swing dynamics]
\label{ex:two-machine}
The dynamics and domain are
\begin{align}
 \dot x_1&=x_2,\nonumber\\
 \dot x_2&=-\tfrac12x_2-\sin(x_1+\tfrac\pi3)+\sin\tfrac\pi3,\nonumber\\
 \mathcal D&=[-2,3]\times[-3,1.5].                       \label{eq:machine}
\end{align}
The origin is an exponentially stable equilibrium. Although a quadratic local
Lyapunov function could be obtained from the Hurwitz linearization, we deliberately
use a degree-two homogeneous neural anchor to illustrate the proposed architecture.
The anchor has $48$ tanh features, and the multiplicative correction network has
two tanh hidden layers of width $32$. We formally verify that
\[
 \begin{gathered}
 \Omega_{\theta,c}=\{x\in\mathcal D:W_\theta(x)\le c\},\quad c=0.7059,
 \end{gathered}
\]
is a certified inner approximation of the region of attraction. The verified
sublevel set and the learned Lyapunov function are shown in
Fig.~\ref{fig:examples}.  The dReal run~\cite{GaoEtAl2013}, using 48 parallel
jobs, and the CROWN bound-propagation run implemented in
$\alpha,\beta$-CROWN~\cite{XuEtAl2021}, on an NVIDIA H100
GPU, take $1044.80$ and $20.32$ seconds, respectively.
\end{example}

\begin{example}[Cubic-core planar system]
\label{ex:cubic-core}

Let $r^2=x_1^2+x_2^2$ and
\begin{equation}
 f_3=\begin{bmatrix}
4x_1^3-x_1^2x_2-6x_1x_2^2-x_2^3\\
x_1^3+4x_1^2x_2+x_1x_2^2-6x_2^3
\end{bmatrix}.                                           \label{eq:cubic}
\end{equation}
It is shown in \cite{LiuFitzsimmons2026} that the origin of $\dot x=f_3(x)$ is globally
asymptotically stable, but that $f_3$ admits no polynomial Lyapunov function.
We consider $f=f_3+h_5+h_7$ on
$\mathcal D=[-1.5,1.5]^2$, where
\begin{align}
 h_5&=\tfrac15r^4(3+\sin x_1+\cos x_2)x,\nonumber\\
 h_7&=\tfrac1{20}r^6(3+\cos(x_1+x_2)+\sin(x_1-x_2))x.   \label{eq:perturb}
\end{align}
We use a degree-two homogeneous neural anchor with $48$ tanh features and a
multiplicative correction network with two tanh hidden layers of width $32$. 
We formally verify that
\[
 \begin{gathered}
 \Omega_{\theta,c}=\{x\in\mathcal D:W_\theta(x)\le c\},\quad c=0.04154,
 \end{gathered}
\]
is a certified inner approximation of the region of attraction.
The verified sublevel set and the learned Lyapunov
function are shown in Fig.~\ref{fig:examples}.  The dReal and $\alpha,\beta$-CROWN runs take
$346.19$ and
$53.42$ seconds, respectively.
\end{example}

\begin{figure*}[!t]
\centering
\includegraphics[width=.88\textwidth]{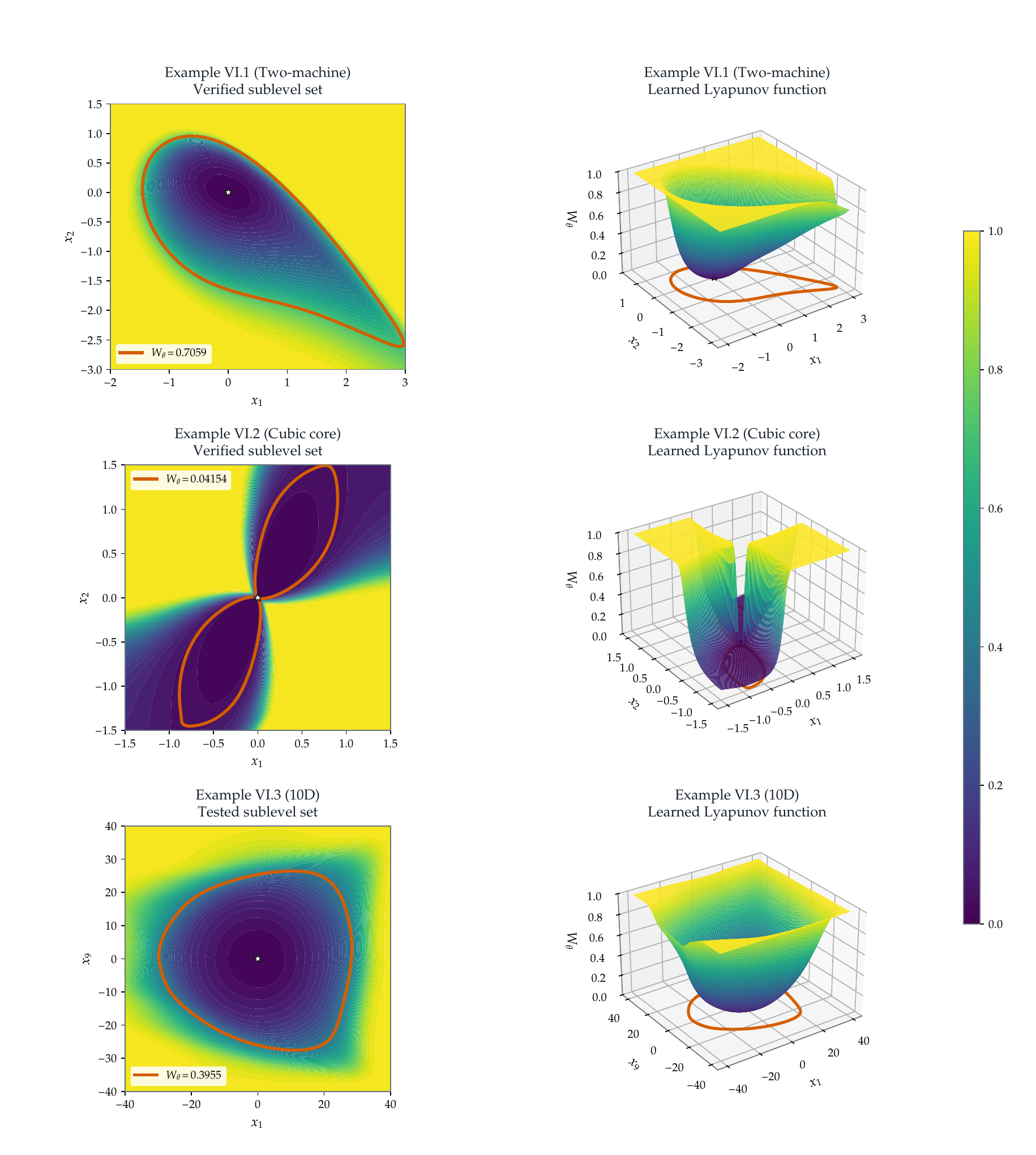}
\caption{
Sublevel sets and learned Lyapunov functions for
Examples~\ref{ex:two-machine}--\ref{ex:grune-10d}.  The first two
sublevel sets
are formally verified.  The 10D result is empirical only and is visualized on the non-invariant
$(x_1,x_9)$ coordinate slice with all other states zero. }
\label{fig:examples}
\end{figure*}

\begin{example}[10D system]
\label{ex:grune-10d}

We next consider the ten-dimensional system introduced by
Gr\"une~\cite{Gruene2021}, also studied in~\cite{LiuEtAl2025}.  Let
$z_i=(x_{2i-1},x_{2i})$ and
$A=\left[\begin{smallmatrix}-1&1/2\\-1/2&-1\end{smallmatrix}\right]$.
The system has the compact representation
\begin{equation}
 \dot z_i=Az_i+v_i,\qquad i=1,\ldots,5,                 \label{eq:grune10}
\end{equation}
where
\begin{align*}
 v_1&=(-.1x_9^2,0)^\top,&v_2&=(-.1x_1^2,0)^\top,\\
 v_3&=(.1x_7^2,0)^\top,&v_4&=0,&
 v_5&=(0,.1x_2^2)^\top.
\end{align*}
On $\mathcal D=[-40,40]^{10}$, we use a degree-two homogeneous neural anchor
with $64$ tanh features and a multiplicative correction network with three
tanh hidden layers of width $256$.  We test the 
sublevel set
\[
 \begin{gathered}
 \Omega_{\theta,c}=\{x\in\mathcal D:W_\theta(x)\le c\},\quad c=0.3955.
 \end{gathered}
\]
We draw $2^{20}$ IID points as
$x=40RZ/\norm{Z}_\infty$, where
$Z\sim\mathcal N(0,I_{10})$
and $R\sim\operatorname{Unif}[0,1]$ are
independent. Of these points, $50.0161\%$ lie in the sublevel, and every point
inside satisfies $\Phi_\theta<-10^{-2}$. A separate $2^{18}$-point test uses
Gaussian directions normalized in Euclidean norm and radii uniform on a
logarithmic scale from $10^{-12}$ to $10^{-3}$; all sampled points lie in the
sublevel and satisfy the same margin. Finally, none of $2^{18}$ sampled box-boundary
points lies in the sublevel. Figure~\ref{fig:examples} shows the result. We emphasize
that these tests provide numerical evidence rather than a formally verified
certificate. Verifying such certificates for high-dimensional systems remains
challenging for current SMT solvers such as dReal \cite{GaoEtAl2013} and
state-of-the-art neural network verifiers such as $\alpha,\beta$-CROWN
\cite{XuEtAl2021}. Designing neural Lyapunov
architectures that facilitate scalable verification is an interesting direction
for future work.
\end{example}

\section{Conclusion}

We establish universal approximation of maximal Lyapunov functions
for systems with an asymptotically stable homogeneous leading field.  A
locally matched cost extends a homogeneous local Lyapunov function to a
$C^1$ maximal target on the DOA.  The positive homogeneous architecture is
$C^1$-universal for the anchor class, while the multiplicative correction
preserves positivity and the dominant local asymptotics.  Every finite
anchored network is therefore strict on a parameter-dependent neighborhood;
on any prescribed compact target sublevel, sufficiently accurate members also
preserve decrease away from the origin. Nested neural sublevel sets can
thus exhaust the DOA.
We also provide verifiable sublevel conditions that can turn this structural result into a
computational certificate.

\emph{AI disclosure:} OpenAI Codex (GPT-5.6) assisted with coding and writing.
The author reviewed all AI-assisted material and takes full responsibility
for the content, claims, and any errors in this paper.

\appendices

\section{Proof of Theorem~\ref{thm:target}}
\label{app:target}

\begin{proof}
For $c\in\{a,b\}$, write
$P_c=\{V_p\le c\}$.  The set
$P_a$ is forward invariant by
\eqref{eq:local-decrease}.  If
$x\in P_a$, then
\[
 \int_0^\infty\omega(\phi(t,x))dt
 =-\int_0^\infty\frac{d}{dt}V_p(\phi(t,x))dt=V_p(x),
\]
and the strict Lyapunov argument gives
$V_p(\phi(t,x))\to0$, proving the exact inner identity.  Every $x\in\A$ reaches
$\operatorname{int}P_a$ in finite time, so the
integral before entrance is
finite and its tail is the terminal value $V_p$.

For regularity, set
\begin{equation}
 \zeta=(1-\chi(V_p))(q+\Lie_fV_p),
 \qquad \omega=-\Lie_fV_p+\zeta.                       \label{eq:zeta}
\end{equation}
The function $\zeta$ is $C^1$: it vanishes on an open neighborhood of the
origin, where $V_p$ need only be $C^1$, and all of its factors are $C^1$
away from the origin.
Fix $x_0\in\A$ and choose $T_0$ with
$\phi(T_0,x_0)\in\operatorname{int}P_a$.
Finite-time continuous dependence
and openness of the finite-time flow domain give a neighborhood $O$ of $x_0$
on which every solution exists through $T_0$ and
$\phi(T_0,x)\in\operatorname{int}P_a$.  Forward
invariance of $P_a$ then
gives $O\subset\A$.  Since $\zeta$ vanishes along every such trajectory after
$T_0$, on $O$ we have
\begin{equation}
 U(x)=V_p(x)+\int_0^{T_0}\zeta(\phi(t,x))\,dt.          \label{eq:fixed-time}
\end{equation}
The integrand $\zeta$ and the
finite-time flow are $C^1$, so differentiation under this finite integral
proves $U\in C^1(O)$.  The dynamic programming identity
\[
 U(\phi(t,x))=U(x)-\int_0^t\omega(\phi(s,x))ds
\]
then gives $\Lie_fU=-\omega$.

It remains to prove properness.  Continuity and positivity give a number
$m>0$ such that $\omega\ge m$ outside
$P_a$: on
$a\le V_p\le b$ this follows by compactness, and outside
$P_b$ it follows
from the outer cost.  Let $\tau_a$ and $\tau_b$ denote the first-entry times
into $P_a$ and
$P_b$, respectively.  For $U(x)\le M$,
$\tau_b\le\tau_a\le M/m$.  If
$x\notin P_b$, then
$\omega=q\ge c_q(1+\norm{f}^2)$ on
$[0,\tau_b]$, and therefore
\[
 \int_0^{\tau_b}\norm{f(\phi(t,x))}^2dt\le M/c_q.
\]
By Cauchy--Schwarz, the path length before $\tau_b$ is at most
$\sqrt{(M/m)(M/c_q)}$.  Both $P_a$ and
$P_b$ are forward invariant, so after
entrance into $P_b$ the trajectory remains in that
compact set.  Hence
$\{U\le M\}$ is bounded and all of its trajectory arcs through $T=M/m$ lie
in one common compact set.  If $x_j\to x$ in this sublevel, continuation and
continuous dependence through $T$ give
$\phi(T,x_j)\to\phi(T,x)\in P_a$.  Thus $x\in\A$,
and continuity of $U$ gives
$U(x)\le M$.  The sublevel is therefore closed, compact, and contained in
$\A$. If $x_j\in\A$ converges to a point of the ambient boundary
$\partial\A$, then $U(x_j)\to\infty$; otherwise, a subsequence would lie in one
compact sublevel of $U$ and have its limit in $\A$. Hence
$W(x_j)=\tanh U(x_j)\to1$, which proves continuity of the stated extension.
\end{proof}

\section{Proof of Theorem~\ref{thm:semiglobal}}
\label{app:semiglobal}

\begin{proof}
Fix $0<c<a$ so that
$P_c=\{V_p\le c\}$ lies inside the neighborhood where
the homogeneous expansion and remainder bound in~\eqref{eq:expansion} hold,
and $c<\ell_+$.  There are positive
constants $\alpha,C_V,C_f$ such that, throughout
$P_c$,
\begin{align*}
 \Lie_fV_p&\le-\alpha\norm{x}^{p+d-1},&
 V_p&\le C_V\norm{x}^p,\\
 \norm{f(x)}&\le C_f\norm{x}^d.&&
\end{align*}
Let $R_c=\max_{P_c}\norm{x}$.  The set
$Q=K_{\ell_+}\setminus\operatorname{int}P_c$ is a
compact annulus, so
\[
 m=\min_Q\omega>0,\qquad B=\max_Q\norm{f}<\infty.
\]

By Corollary~\ref{cor:factor}, $F=\log(U/V_p)$ belongs to $C^1(\A)$ and is
zero on $P_c$.  Choose
$\psi\in C_c^\infty(\A)$ with $\psi=1$ on a
neighborhood of $K_{\ell_+}$, and extend $\psi F$ by zero to $\R^n$.
The $C^1$ universal approximation result in
Lemma~\ref{lem:derivative-uat} gives tanh networks $N_j$ converging to this
extension together with their first derivatives.  Setting
$G_j=N_j-N_j(0)$ preserves derivative convergence and centers the correction
at the origin.  Hence
\[
 G_j\to F,\qquad H_j=V_pe^{G_j}\to U
 \quad\text{in }C^1(K_{\ell_+}),
\]
and $\sup_{P_c}\norm{DG_j}\to0$.  Since composition
by tanh is continuous
in $C^1$ on compact sets,
\[
 W_j=\tanh H_j\to\tanh U=W
 \quad\text{in }C^1(K_{\ell_+}).
\]

For all sufficiently large $j$, \eqref{eq:Hdot} gives, on
$P_c\setminus\{0\}$,
\[
 \Lie_fH_j
 \le e^{G_j}\!\left[-\alpha+C_VC_fR_c
       \sup_{P_c}\norm{DG_j}\right]
       \norm{x}^{p+d-1}<0.
\]
On the annulus $Q$,
\[
 \begin{aligned}
 \Lie_fH_j&=-\omega+D(H_j-U)f,\\
 &\le-m+B\sup_Q\norm{D(H_j-U)}<0.
 \end{aligned}
\]
The bound is strict for the same sufficiently accurate
approximation. Because $0\le W_j<1$
and $\Lie_fW_j=(1-W_j^2)\Lie_fH_j$, both strict-decrease inequalities in
\eqref{eq:sg-decrease} hold for all sufficiently large $j$.  Let
\[
 \tau=\min\!\left\{\frac{\varepsilon}{2},
                    \frac{\ell_+-\ell_-}{4}\right\}>0.
\]
Choose $j$ large enough that the preceding strict-decrease inequalities hold
and both $C^1$ errors in~\eqref{eq:sg-approx} are less than $\tau$.  Thus
\eqref{eq:sg-approx} holds with the prescribed $\varepsilon$, and in
particular $\sup_{K_{\ell_+}}|H_j-U|<\tau$.  Set $H_\theta=H_j$ and
$W_\theta=W_j$, and select
\[
 \ell_-+\tau<\gamma<\ell_+-\tau.
\]
Then $H_\theta<\gamma$ on $K_{\ell_-}$ and
$H_\theta>\gamma$ on $\partial K_{\ell_+}$.  With
$c_\theta=\tanh\gamma$, monotonicity of tanh gives
$\{W_\theta\le c_\theta\}=\{H_\theta\le\gamma\}$.

Every target sublevel is path connected to the origin along trajectories.
Thus $K_{\ell_-}$ belongs to the origin component
$\Omega_{\theta,\gamma}^0$.  Since $H_\theta$ is proper, this component is
compact; it cannot leave $K_{\ell_+}$ without meeting its boundary, where
$H_\theta>\gamma$.  This proves \eqref{eq:sg-sandwich}.  Strict decrease on
the larger target sublevel implies that a trajectory starting in
$\Omega_{\theta,\gamma}^0$ cannot cross its boundary level $H_\theta=\gamma$
and, by continuity, remains in the same connected component. Thus
$\Omega_{\theta,\gamma}^0$ is positively invariant. Its compactness gives
forward completeness, and the Lyapunov conditions guarantee convergence to
the origin.

Finally choose levels
$\ell_-^j<\ell_+^j<\ell_-^{j+1}$ with $\ell_-^j\to\infty$. For each $j$,
choose a corresponding network $N_{\theta_j}$ and level $\gamma_j$, and write
$\Omega_j=\Omega_{\theta_j,\gamma_j}^0$. Then the components satisfy
$\Omega_j\subset K_{\ell_+^j}\subset K_{\ell_-^{j+1}}\subset\Omega_{j+1}$,
and their union is $\A$.
\end{proof}

\bibliographystyle{IEEEtran}
\bibliography{references}

\end{document}